\documentclass[aps,pra,reprint,superscriptaddress]{revtex4-2}
\usepackage{amsmath,amssymb,amsthm,mathtools,braket,bm,color,hyperref,graphicx}
\hypersetup{colorlinks=true,linkcolor=blue,citecolor=blue,urlcolor=blue}

\newtheorem{theorem}{Theorem}
\newtheorem{proposition}{Proposition}
\newtheorem{lemma}{Lemma}
\newtheorem{corollary}{Corollary}
\newtheorem{remark}{Remark}

\begin{document}

\title{Complete product-state contact set and optimality of the canonical three-qubit Shifts witness}

\author{Bing-Dong Wan}
\affiliation{School of Physics and Electronic Technology, Liaoning Normal University, Dalian 116029, China}
\affiliation{Center for Theoretical and Experimental High Energy Physics, Liaoning Normal University, Dalian 116029, China}
\author{Yu-Chen Guo}
\email{ycguo@lnnu.edu.cn}
\affiliation{School of Physics and Electronic Technology, Liaoning Normal University, Dalian 116029, China}
\affiliation{Center for Theoretical and Experimental High Energy Physics, Liaoning Normal University, Dalian 116029, China}

\begin{abstract}
For the three-qubit Shifts unextendible product basis (UPB), the minimum product-state expectation value $\lambda_{\mathrm{Shifts}}=1-3\sqrt6/8$ of the UPB projector is known, and four attaining product states were exhibited previously. We determine the complete equality set over all complex product states and prove that it consists of exactly eight product rays: two rays invariant under cyclic permutation of the three qubits and two cyclic orbits of size three. The proof combines exact phase analysis, a Gr\"obner-basis classification of all 28 finite real stationary points, and a complete treatment of the projective boundaries. Crucially, the eight contact rays span the full three-qubit Hilbert space, whereas the four previously known rays do not. The complete equality classification therefore establishes that the associated projector witness satisfies the spanning criterion and is optimal in the sense that no nonzero positive semidefinite operator can be subtracted while preserving block positivity. As a corollary, for the white-noise family $\rho(p)=(1-p)\rho_{\mathrm{Shifts}}+p\mathbb I/8$, where $\rho_{\mathrm{Shifts}}$ is the complementary-projector Shifts state, the same witness detects entanglement for $p<2-3\sqrt6/4\approx16.29\%$.
\end{abstract}

\keywords{bound entanglement, unextendible product basis, optimal entanglement witness, spanning property, product-state contact set}

\maketitle

\section{Introduction}
\label{sec:intro}

Bound entanglement---entanglement that cannot be distilled into pure maximally entangled states by local operations and classical communication (LOCC)---exhibits the distinction between entanglement and distillability~\cite{Horodecki1998,Peres1996,Horodecki1996,HorodeckiRMP2009,Hiesmayr2025review}. Bound-entangled states can nevertheless contain distillable secure key~\cite{HorodeckiKey2005}. Their preparation and certification have been demonstrated experimentally~\cite{Amselem2009,Lavoie2010}, while robust verification remains an active problem~\cite{Sentis2018}.

A central detection tool is the entanglement witness---a Hermitian operator $W$ whose expectation value is non-negative on all separable states but negative on at least one entangled state~\cite{Terhal2000,Lewenstein2000,GuhneToth2009,ChruscinskiSarbicki2014,Bruss2002}. Throughout this work, ``separable'' means fully separable, and block positivity means nonnegativity on all pure fully product vectors. The witness considered below therefore detects failure of full separability, rather than genuine multipartite entanglement. Optimality---the property that no nonzero positive semidefinite operator can be subtracted while retaining block positivity---is controlled by the zero-expectation product states, the \emph{contact set} of $W$. If these states span the full Hilbert space, the witness is optimal by the spanning criterion of Lewenstein \emph{et al.}~\cite{Lewenstein2000}, a criterion further analyzed for three-qubit witnesses in Ref.~\cite{Kye2015}. Testing it, however, requires the \emph{complete} set of minimizing product states, which is known explicitly only in special cases; when the underlying product-state optimization cannot be solved analytically, one must fall back on numerical methods~\cite{Wiesniak2020}.

Unextendible product bases (UPBs) offer a constructive route to bound-entangled states~\cite{Bennett1999upb,DiVincenzo2003,Bravyi2004,Johnston2014}. A UPB is an incomplete set of mutually orthogonal product states spanning a subspace whose orthogonal complement contains no product state. For a UPB $\mathcal S=\{|\psi_i\rangle\}_{i=1}^n$, the normalized projector onto that complement has positive partial transpose (PPT) across every bipartition and is entangled. The associated projector witness, introduced in the early UPB literature and further developed in subsequent constructions~\cite{Bennett1999upb,Terhal2000,BGR2005}, is $W=P_{\mathcal S}-\lambda\mathbb I$, where $P_{\mathcal S}=\sum_i|\psi_i\rangle\langle\psi_i|$ projects onto the UPB subspace and
\begin{equation}
\lambda = \min_{\substack{|\phi\rangle\ \mathrm{product}\\ \langle\phi|\phi\rangle=1}}
\langle\phi|P_{\mathcal S}|\phi\rangle
=\min_{\substack{|\phi\rangle\ \mathrm{product}\\ \langle\phi|\phi\rangle=1}}
\sum_i |\langle\phi|\psi_i\rangle|^2 > 0
\label{eq:lambda-def}
\end{equation}
is the minimum expectation value of the UPB projector over normalized product states $|\phi\rangle=|\phi_A\rangle|\phi_B\rangle\cdots$. For the three-qubit Shifts UPB, Branciard \emph{et al.} proved analytically that
\begin{equation}
\lambda_{\mathrm{Shifts}}=1-\frac{3\sqrt6}{8}
\label{eq:known-lambda}
\end{equation}
and exhibited four product states attaining equality~\cite{Branciard2010}, in the course of evaluating the geometric entanglement~\cite{WeiGoldbart2003} of the complementary-projector state. The associated Shifts state is separable with respect to every fixed bipartition but is not fully separable~\cite{Bennett1999upb,Branciard2010,Acin2001,SeevinckUffink2008}, and its four-state basis has minimum cardinality among three-qubit orthogonal UPBs~\cite{Bravyi2004,Johnston2014}.

Yet beyond the value of $\lambda_{\mathrm{Shifts}}$ and these four exemplary minimizers, the equality structure of the optimization has remained unexplored: the complete equality set was not classified, and---crucially for applications---the four known rays span only a four-dimensional subspace, too small to test the spanning criterion. The optimality of the canonical Shifts projector witness has therefore remained an open question. More broadly, complete contact-set classifications are rare for UPB witnesses, and optimality has typically been presumed on numerical grounds rather than proved exactly.

In this work we close this gap. We determine the complete equality set over all complex product states and prove that it consists of exactly eight product rays: two rays invariant under cyclic permutation of the three qubits and two cyclic orbits of size three. We prove completeness through exact phase-equality analysis, classification of all 28 finite real stationary points, and complete treatment of the projective boundaries. The eight contact rays---unlike the four previously known ones---span the full Hilbert space, establishing witness optimality by the spanning criterion~\cite{Lewenstein2000}, and we obtain the white-noise detection threshold as a direct corollary of the previously known coefficient.

The remainder of this paper is organized as follows. Section~\ref{sec:setup} introduces the Shifts UPB and its projector witness and reduces the complex optimization to real amplitudes. Section~\ref{sec:result} reviews the known coefficient and derives the symmetric equality rays. Section~\ref{sec:global} establishes the complete equality classification and its white-noise corollary, while Sec.~\ref{sec:spanning} proves the spanning property and witness optimality. Section~\ref{sec:discuss} discusses the implications and outlook. Appendices~\ref{app:phase}--\ref{app:reproduce} provide the phase-equality analysis, the Gr\"obner-basis and projective-boundary details, the spanning determinant, and details of the symbolic verification.

\section{Shifts UPB and projector witness}
\label{sec:setup}

Let $\mathcal{H} = (\mathbb{C}^2)^{\otimes 3}$ with $D = 8$. In the convention used here, the Shifts UPB~\cite{Bennett1999upb,Bravyi2004} consists of $n=4$ orthogonal product states
\begin{equation}
\mathcal{S}_{\mathrm{Shifts}} = \big\{\,|0,1,+\rangle,\; |1,+,0\rangle,\; |+,0,1\rangle,\; |-,-,-\rangle\,\big\},
\label{eq:shifts-upb}
\end{equation}
where $|\pm\rangle = (|0\rangle \pm |1\rangle)/\sqrt{2}$. We denote the four states in Eq.~\eqref{eq:shifts-upb}, in the displayed order, by $\{|\psi_i\rangle\}_{i=1}^4$. This representation is locally unitarily equivalent to the Shifts convention used by Branciard \emph{et al.}~\cite{Branciard2010}. The complementary subspace has dimension $D-n = 4$ and contains no product state; hence
\begin{equation}
\rho_{\mathrm{Shifts}} = \tfrac{1}{4}\bigl(\mathbb{I} - \textstyle\sum_{i=1}^{4}|\psi_i\rangle\langle\psi_i|\bigr)
\label{eq:rho-shifts}
\end{equation}
is separable across every fixed bipartition but not fully separable, and is therefore bound entangled~\cite{Bennett1999upb,Branciard2010}.

The canonical projector witness~\cite{Bennett1999upb,Terhal2000,BGR2005} is $W = P_{\mathcal{S}} - \lambda\,\mathbb{I}$, where $P_{\mathcal{S}} = \sum_{i=1}^{4}|\psi_i\rangle\langle\psi_i|$ and $\lambda$ is defined in Eq.~\eqref{eq:lambda-def}. Here and throughout, the optimization is over normalized fully product states $|\phi\rangle = |a\rangle|b\rangle|c\rangle$ of the three qubits (labeled $A,B,C$).

\subsection{Complex phase elimination}
\label{sec:complex}

\begin{lemma}[Phase-vertex reduction]
\label{lem:real}
For the Shifts UPB, the minimum in Eq.~\eqref{eq:lambda-def} is attained at real amplitudes: it suffices to consider $|a\rangle,|b\rangle,|c\rangle \in \mathbb{R}^2$.
\end{lemma}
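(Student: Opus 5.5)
The plan is to parametrize each local state up to an irrelevant global phase and show that the objective depends on the three relative phases only through their cosines, and in a multi-affine way. Concretely, I would write
\[
|a\rangle=(a_0,\,a_1e^{i\alpha}),\quad |b\rangle=(b_0,\,b_1e^{i\beta}),\quad |c\rangle=(c_0,\,c_1e^{i\gamma}),
\]
with $a_0,a_1,b_0,\dots,c_1\ge 0$ and $a_0^2+a_1^2=b_0^2+b_1^2=c_0^2+c_1^2=1$. This is no loss of generality, since a global phase on each factor does not change $\langle\phi|P_{\mathcal S}|\phi\rangle$. Then set $x=\cos\alpha$, $y=\cos\beta$, $z=\cos\gamma\in[-1,1]$.

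Next I would compute the four overlaps term by term, using $|\langle +|c\rangle|^2=(1+2c_0c_1z)/2$ and $|\langle -|c\rangle|^2=(1-2c_0c_1z)/2$ (and similarly for $a$, $b$). This gives
\begin{align*}
f={}&a_0^2b_1^2\,\tfrac{1+2c_0c_1z}{2}+a_1^2c_0^2\,\tfrac{1+2b_0b_1y}{2}+b_0^2c_1^2\,\tfrac{1+2a_0a_1x}{2}\\
&+\tfrac18(1-2a_0a_1x)(1-2b_0b_1y)(1-2c_0c_1z).
\end{align*}
For fixed moduli, $f$ is therefore a multi-affine polynomial in $(x,y,z)$ on the cube $[-1,1]^3$. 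The key point is that each of the first three UPB states has exactly one factor in the $\{|\pm\rangle\}$ basis, and the fourth is a product of three such factors.

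The conclusion then follows from a standard vertex argument. The product-state manifold is compact and $f$ is continuous, so a minimizer exists. Take any minimizer. With $y,z$ fixed, $f$ is affine in $x$, so replacing $x$ by whichever endpoint $\pm1$ does not increase $f$ keeps it a minimizer. Repeating this for $y$ and then $z$ produces a minimizer with $x,y,z\in\{\pm1\}$, i.e.\ with phases in $\{0,\pi\}$. Absorbing these signs into $a_1,b_1,c_1\in\mathbb R$ gives real vectors $|a\rangle,|b\rangle,|c\rangle\in\mathbb R^2$, which is the claim of Lemma~\ref{lem:real}.

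No step is a serious obstacle. The only place needing care is checking that no cross term couples two phases in a non-multi-affine way, for instance through $\cos(\alpha-\beta)$. That cannot happen here, because every Shifts state lies in $\{|0\rangle,|1\rangle\}$ or $\{|\pm\rangle\}$ on each qubit. For later use, in classifying the full equality set, I would also record that the vertex replacement is strict unless the coefficient of $x$ (resp.\ $y$, $z$) vanishes. This is the input for the exact phase-equality analysis, which shows that complex minimizers reduce to real ones up to local phases.
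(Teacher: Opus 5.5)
Your proof is correct and follows essentially the same route as the paper. For fixed moduli the objective is multi-affine in the three cosines of the relative phases; your overlap formula agrees with the paper's numerator $\mathcal F$ after normalization. A coordinatewise vertex argument on $[-1,1]^3$ then yields a real minimizer. Your modulus--phase parametrization with $a_0,a_1\ge 0$ covers the ray $|1\rangle$ directly, and your compactness argument supplies existence of a minimizer; the paper handles both points by continuity into the complementary projective chart.
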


\begin{proof}
Parameterize each qubit as $|q_j\rangle = (|0\rangle + z_j|1\rangle)/\sqrt{1+|z_j|^2}$ with $z_j = r_j e^{i\phi_j}$ ($r_j \ge 0$, $\phi_j \in [0,2\pi)$). The omitted ray $|1\rangle$ is recovered by continuity as $r_j\to\infty$, or equivalently in the complementary projective chart. For $F(\mathbf r,\boldsymbol{\phi})=\langle\phi|P_{\mathcal S}|\phi\rangle$, a direct computation gives
\begin{equation}
F(\mathbf{r},\boldsymbol{\phi})=
\frac{\mathcal{F}(\mathbf{r},\boldsymbol{\phi})}
{8\prod_{j=1}^3(1+r_j^2)},
\label{eq:F-complex}
\end{equation}
where
\begin{equation}
\begin{aligned}
\mathcal{F}={}&4r_2^2(1+r_3^2+2r_3\cos\phi_3)\\
&+4r_1^2(1+r_2^2+2r_2\cos\phi_2)\\
&+4r_3^2(1+r_1^2+2r_1\cos\phi_1)\\
&+\prod_{j=1}^3(1+r_j^2-2r_j\cos\phi_j).
\end{aligned}
\end{equation}
For fixed radii $r_1,r_2,r_3$, let $N\equiv\mathcal F$. It is a multilinear function of $x_j = \cos\phi_j$:
\[
N = A + \sum_j B_j x_j + \sum_{i<j} C_{ij} x_i x_j + D_{123} x_1 x_2 x_3,
\]
where $A$, $B_j$, $C_{ij}$, and $D_{123}$ are real coefficients depending only on the radii. For fixed $x_2,x_3$, this expression is affine in $x_1$ and therefore has a minimum at $x_1=\pm1$. Repeating the argument for $x_2$ and $x_3$ proves that a phase minimum is attained at a vertex of $[-1,1]^3$. Equivalently,
\begin{equation}
N(\mathbf{x})=\sum_{\boldsymbol{\sigma}\in\{\pm1\}^3}
N(\boldsymbol{\sigma})\prod_{j=1}^3\frac{1+\sigma_jx_j}{2},
\label{eq:multilinear-interpolation}
\end{equation}
so every interior value is a convex combination of vertex values. Hence $\min_{\boldsymbol{\phi}}F=\min_{x_j\in\{\pm1\}}F$, corresponding to real amplitudes.
\end{proof}

By Lemma~\ref{lem:real}, we may restrict to real product states. Write
\[
\begin{aligned}
|a\rangle&=\cos a\,|0\rangle+\sin a\,|1\rangle,\\
|b\rangle&=\cos b\,|0\rangle+\sin b\,|1\rangle,\\
|c\rangle&=\cos c\,|0\rangle+\sin c\,|1\rangle.
\end{aligned}
\]
and introduce the tangent variables $u=\tan a$, $v=\tan b$, and $w=\tan c$. The objective then takes the compact rational form
\begin{equation}
\begin{split}
f(u,v,w)={}&\Bigl[4v^2(1+w)^2+4u^2(1+v)^2
+4w^2(1+u)^2\\
&+(1-u)^2(1-v)^2(1-w)^2\Bigr]\\
&\big/\Bigl[8(1+u^2)(1+v^2)(1+w^2)\Bigr],
\end{split}
\label{eq:f-of-t-general}
\end{equation}
which is manifestly invariant under the cyclic permutation $(u,v,w)\mapsto(v,w,u)$. This permutation generates the cyclic group $C_3=\langle(ABC)\rangle$; below, ``$C_3$-symmetric'' or ``$C_3$-fixed'' refers to invariance under this action.

\section{Known coefficient and symmetric equality rays}
\label{sec:result}

\begin{proposition}[Symmetric equality rays]
\label{prop:symmetric}
The restriction of Eq.~\eqref{eq:lambda-def} to the $C_3$-symmetric product states has two minima, $|\phi^*\rangle = |a^*\rangle^{\otimes 3}$ and $|\phi^{*\prime}\rangle = |a^{*\prime}\rangle^{\otimes 3}$, where $|a^*\rangle = \cos a^*\,|0\rangle + \sin a^*\,|1\rangle$ and $|a^{*\prime}\rangle = \sin a^*\,|0\rangle + \cos a^*\,|1\rangle$ with
\begin{equation}
\tan a^* = 2 + \sqrt{6} + \sqrt{9 + 4\sqrt{6}}\,.
\label{eq:a-star}
\end{equation}
Both rays attain the previously known minimum~\cite{Branciard2010}:
\begin{equation}
\boxed{\;\lambda_{\mathrm{Shifts}} = 1 - \frac{3\sqrt{6}}{8} \;\approx\; 0.08144134645630821\;}
\label{eq:lambda-closed}
\end{equation}
(the smaller root of $32\lambda^2 - 64\lambda + 5 = 0$). They form the two $C_3$-fixed members of the complete equality set derived below.
\end{proposition}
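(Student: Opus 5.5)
The plan is to reduce everything to a single real variable and solve that problem in closed form. A $C_3$-fixed product ray has the form $|q\rangle^{\otimes3}$. In the chart of Lemma~\ref{lem:real} this means $z_1=z_2=z_3=z=re^{i\phi}$. I would treat the real case first and the complex phases afterwards. For real amplitudes, set $u=v=w=t$ in Eq.~\eqref{eq:f-of-t-general}:
\[
g(t)=\frac{12t^2(1+t)^2+(1-t)^6}{8(1+t^2)^3}.
\]
Multiplying numerator and denominator by $t^{-6}$ shows $g(1/t)=g(t)$. This inversion $t\mapsto1/t$ is exactly the swap $|0\rangle\leftrightarrow|1\rangle$ that maps $|a^*\rangle$ to $|a^{*\prime}\rangle$, and it explains why the minima come in the stated pair.

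I would exploit the inversion symmetry through the substitution $s=t+1/t$. Dividing numerator and denominator by $t^3$ gives
\[
g=h(s)=\frac{(s-2)^3+12(s+2)}{8s^3},\qquad |s|\ge 2 .
\]
The boundary rays $|0\rangle^{\otimes3}$ and $|1\rangle^{\otimes3}$ correspond to $s\to\pm\infty$, where $h\to1/8$. Writing $y=1/s\in[-1/2,1/2]$, one gets
\[
8h=1-6y+24y^2+16y^3 .
\]
Its stationary points solve $y^2+y-\tfrac18=0$, so $y=(-1\pm\sqrt{3/2})/2$. Only the root $y_*=(\sqrt6-2)/4$ lies in the admissible range. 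Evaluating the cubic there and comparing with the endpoints $y=\pm1/2$ and $y=0$ should give $h(y_*)=1-3\sqrt6/8$, the smaller root of $32\lambda^2-64\lambda+5=0$. Solving $t+1/t=1/y_*=4/(\sqrt6-2)=2(\sqrt6+2)$ then gives $t=2+\sqrt6\pm\sqrt{(2+\sqrt6)^2-1}=2+\sqrt6\pm\sqrt{9+4\sqrt6}$. These two roots are reciprocal, which yields Eq.~\eqref{eq:a-star} and its partner $\cot a^*$. Since the minimum is interior in $y$ and attained at a unique $y_*$, each $s$ corresponds to exactly these two $t$ values, so there are precisely two real symmetric minimizers.

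The remaining step, and the main obstacle, is excluding nonreal symmetric minimizers. I cannot simply reuse the vertex argument of Lemma~\ref{lem:real} inside the symmetric family, because setting $x_1=x_2=x_3=\cos\phi$ makes $N$ cubic rather than multilinear in one variable. Instead I would apply Lemma~\ref{lem:real} to the full three-variable problem at fixed equal radii $r$. Eq.~\eqref{eq:multilinear-interpolation} writes $N(x,x,x)$ as a convex combination of the eight vertex values $N(\boldsymbol\sigma)$, with strictly positive weights whenever $|x|<1$. The global bound $F\ge\lambda$ from Ref.~\cite{Branciard2010} holds at every vertex. So equality at a nonreal phase would force all eight vertex values at radius $r$ to equal $\lambda$ times the normalization, including the mixed vertices such as $\boldsymbol\sigma=(1,1,-1)$. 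I expect a direct check to refute this: at $r=t_*$ (the only radius at which the all-equal vertex attains $\lambda$), evaluate $f$ at a mixed-sign real point and show it strictly exceeds $\lambda$. This would establish that the two rays above are the only $C_3$-fixed equality rays.
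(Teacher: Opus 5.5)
Your proof is correct, and it differs from the paper's in two ways.

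\textbf{Real symmetric line.} The paper differentiates $f_{\mathrm{sym}}$ directly and factors the derivative numerator as $(t-1)(t+1)(t^4-8t^3-6t^2-8t+1)$. It then compares the values at $t=\pm1$, at the quartic roots $t_\pm$, and as $t\to\infty$.

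You instead use the inversion symmetry through $y=1/s=t/(1+t^2)=\tfrac12\sin 2a$. This turns the problem into the cubic $8h=1-6y+24y^2+16y^3$ on the compact interval $[-1/2,1/2]$. The variable $y$ is smooth on all of $\mathbb{RP}^1$: both $t=0$ and $t=\infty$ map to $y=0$, and the paper's critical points $t=\pm1$ become the endpoints $y=\pm1/2$. Global minimality then reduces to monotonicity on either side of $y_*$. Reducing modulo $8y^2+8y-1=0$ gives $8h(y_*)=2-12y_*=8-3\sqrt6$ without substituting nested radicals. The paper's quartic is the same condition, since dividing it by $t^2$ gives $s^2-8s-8=0$.

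\textbf{Nonreal symmetric states.} The paper's proof of this proposition stays within real amplitudes. It leaves the exclusion of nonreal symmetric minimizers to Theorem~\ref{thm:main} and Appendix~\ref{app:phase}, using the sign-vertex table there. You handle this inside the proposition with the same convex interpolation. The weights are strictly positive because a nonreal $C_3$-fixed state has $x_1=x_2=x_3=x$ with $|x|<1$.

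\textbf{Your final check is easier than you suggest.} There are two radii, $t_+$ and $t_-$, at which the $(+,+,+)$ vertex attains $\lambda_{\mathrm{Shifts}}$, not a single $t_*$. You need no mixed-sign evaluation at either radius, though. The $(-,-,-)$ vertex is the real symmetric point $t=-r\le 0$. Your own formula gives $8h=1-6y+8y^2(3+2y)\ge 1-6y\ge 1$ there, so $h\ge 1/8>\lambda_{\mathrm{Shifts}}$. Together with $F\ge\lambda_{\mathrm{Shifts}}$ at the other vertices, which is the global bound of Ref.~\cite{Branciard2010}, this gives strict inequality at every nonreal symmetric state.

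Your route therefore proves the full proposition, including completeness within the complex $C_3$-fixed family, from the one-variable analysis plus the known global bound. The paper's route is more direct computationally but proves only the real statement at this point.
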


\begin{proof}[Derivation from the $C_3$-symmetric submanifold]
The $C_3$-symmetric submanifold $u=v=w=t$ reduces the objective to
\begin{equation}
f_{\mathrm{sym}}(t) = \frac{12t^2(t+1)^2 + (t-1)^6}{8(t^2+1)^3}.
\label{eq:f-of-t}
\end{equation}
Differentiating and factoring the numerator of $f_{\mathrm{sym}}'(t)$:
\begin{equation}
f_{\mathrm{sym}}'(t) \propto (t-1)(t+1)(t^4 - 8t^3 - 6t^2 - 8t + 1).
\label{eq:critical}
\end{equation}
The quartic $t^4 - 8t^3 - 6t^2 - 8t + 1 = 0$ has two real roots, $t_+ = 2+\sqrt{6}+\sqrt{9+4\sqrt{6}}$ and $t_- = 2+\sqrt{6}-\sqrt{9+4\sqrt{6}}$, satisfying $t_+ t_- = 1$. The remaining real critical points $t=\pm 1$ yield $f_{\mathrm{sym}} = 3/4$ and $f_{\mathrm{sym}} = 1$, both larger than $\lambda_{\mathrm{Shifts}}$, while the projective endpoint satisfies $\lim_{t\to\infty}f_{\mathrm{sym}}(t)=1/8>\lambda_{\mathrm{Shifts}}$. Substituting $t_\pm$ gives $f_{\mathrm{sym}}(t_\pm) = 1 - 3\sqrt{6}/8$.

\end{proof}

\begin{remark}
The reciprocal roots $t_+t_-=1$ define two distinct contact rays of the fixed projector $P_{\mathcal S}$. Their equal overlap is an algebraic degeneracy of the objective and does not follow from a local symmetry that preserves $P_{\mathcal S}$.
\end{remark}

\section{Complete equality classification}
\label{sec:global}

Proposition~\ref{prop:symmetric} recovers two symmetric rays attaining the known minimum. To determine the complete equality set, we classify the nonsymmetric stationary points and the projective boundaries.

\subsection{Gr\"obner basis elimination}

The objective $f(u,v,w)$ of Eq.~\eqref{eq:f-of-t-general} is a rational function, smooth on $\mathbb{R}^3$ and bounded ($f\in[0,1]$). We compute the Gr\"obner basis~\cite{Cox2015} of the stationary system $\{\partial_u f, \partial_v f, \partial_w f\}=0$ in lexicographic order $w \succ v \succ u$, which eliminates $v$ and $w$ and yields a single elimination polynomial $P(u)=0$ that any $u$-component of a stationary point must satisfy. This polynomial factors completely over $\mathbb{Q}$:
\begin{widetext}
\begin{align}
P(u) =\;& u\,(u-1)(u+1)\,(u^2-3)\,(3u^2-1)\,(u^2-4u+1)\,(u^2+4u+1)\nonumber\\
&\times\,(3u^4+6u^2-5)\,(5u^4-6u^2-3)\nonumber\\
&\times\,(u^4-8u^3-6u^2-8u+1)\,(u^4+8u^3-6u^2+8u+1).
\label{eq:elim-poly}
\end{align}
\end{widetext}
Its 19 real roots, $u \in \{0,\, \pm 1,\, \pm\sqrt{3},\, \pm 1/\sqrt{3},\, 2\pm\sqrt{3},\, -(2\pm\sqrt{3}),\, \pm\tfrac{\alpha}{3}\ (\text{with }\alpha=\sqrt{6\sqrt{6}-9}),\, \pm\sqrt{\tfrac{3+2\sqrt{6}}{5}},\, \pm t_+,\, \pm t_-\}$, enumerate all possible $u$-coordinates of stationary points.

For each real root $u^*$, the remaining Gr\"obner basis elements $G_1(v,u)=0$ and $G_2(w,v,u)=0$ determine the companion variables $(v,w)$. Solving these equations yields the following complete classification of stationary points in the main projective chart $|0\rangle + u|1\rangle$ for each qubit:

\begin{enumerate}
\item \textbf{$C_3$-symmetric stationary points (4 points).} $u=v=w \in \{1, -1, t_+, t_-\}$. The function values are $f(1,1,1)=3/4$, $f(-1,-1,-1)=1$, and $f(t_\pm,t_\pm,t_\pm)=\lambda_{\mathrm{Shifts}}$.

\item \textbf{Minimum orbit (6 points, $f = \lambda_{\mathrm{Shifts}}$).} Two cyclic orbits of size 3:
\begin{align}
(u,v,w) &= \bigl(\tfrac{\alpha}{3},\; -t_+,\; -\tfrac{3}{\alpha}\bigr), \label{eq:orbit1}\\
(u,v,w) &= \bigl(-\tfrac{\alpha}{3},\; -t_-,\; \tfrac{3}{\alpha}\bigr), \label{eq:orbit2}
\end{align}
and their cyclic permutations. Exact substitution into the three gradient numerators gives $\nabla f=\mathbf{0}$ and $f=\lambda_{\mathrm{Shifts}}$.

\item \textbf{Lower nonminimum level (9 points).} Three cyclic orbits, generated by
\begin{align}
&\bigl(-\sqrt3,-1/\sqrt3,1\bigr),
\quad\bigl(-(2-\sqrt3),0,1/\sqrt3\bigr),\nonumber\\
&\bigl(-1,2-\sqrt3,2+\sqrt3\bigr),
\label{eq:nonmin-low}
\end{align}
all have
\begin{equation}
f_-=\frac{7-3\sqrt3}{16}\approx0.112740>\lambda_{\mathrm{Shifts}}.
\end{equation}

\item \textbf{Upper nonminimum level (9 points).} Three cyclic orbits, generated by
\begin{align}
&\bigl(\sqrt3,1/\sqrt3,1\bigr),
\quad\bigl(0,-1/\sqrt3,-(2+\sqrt3)\bigr),\nonumber\\
&\bigl(-1,2+\sqrt3,2-\sqrt3\bigr),
\label{eq:nonmin-high}
\end{align}
all have
\begin{equation}
f_+=\frac{7+3\sqrt3}{16}\approx0.762260>\lambda_{\mathrm{Shifts}}.
\end{equation}
\end{enumerate}

Thus the finite real stationary set contains exactly $4+6+9+9=28$ points. The remaining triangular Gr\"obner-basis equations determine the companion coordinates for every real root of $P(u)$ and produce precisely the cyclic orbits listed above. Appendix~\ref{app:groebner} gives the zero-dimensionality, quotient dimension, real-root count, and the completeness argument.

\subsection{Boundary analysis}

The real projective space $\mathbb{RP}^1$ for each qubit is covered by two charts: $|q\rangle \propto |0\rangle + u|1\rangle$ ($u\in\mathbb{R}$) and $|q\rangle \propto s|0\rangle + |1\rangle$ ($s\in\mathbb{R}$, related by $u=1/s$). The above analysis covers the chart where all three qubits use the first parameterization. Stationary points in charts where one or more qubits use the second parameterization are obtained by the substitution $u \to 1/u$ (and similarly for $v,w$). The key additional candidates are the boundaries where $u\to\infty$ (or $v\to\infty$, $w\to\infty$), which correspond to $s=0$ in the alternative chart. We analyze three cases:

\begin{enumerate}
\item \textbf{One variable $\to\infty$ (e.g., $u\to\infty$, $v,w$ finite).} The limit is
\begin{equation}
f_\infty^{(1)}(v,w) = \frac{4(1+v)^2 + 4w^2 + (1-v)^2(1-w)^2}{8(1+v^2)(1+w^2)}.
\label{eq:boundary-one-function}
\end{equation}
The stationary equations on this projective face give the values
\begin{equation}
\left\{\frac{5\pm\sqrt{13}}{16},\frac{7\pm3\sqrt3}{16},\frac34,1\right\}.
\end{equation}
After also checking the boundaries of this face, its global minimum is
\begin{equation}
\min_{v,w}f_\infty^{(1)}(v,w)=\frac{5-\sqrt{13}}{16}
\approx0.087153>\lambda_{\mathrm{Shifts}}.
\label{eq:boundary-one}
\end{equation}
It is attained at the two pairs $(v,w)$ satisfying $w=v+2$ with
\begin{equation}
v=\frac{1+\sqrt{13}\pm\sqrt{30+6\sqrt{13}}}{2}.
\end{equation}

\item \textbf{Two variables $\to\infty$ (e.g., $u,v\to\infty$, $w$ finite).} The limit is
\begin{equation}
f_\infty^{(2)}(w) = \frac{4 + (1-w)^2}{8(1+w^2)}.
\end{equation}
Differentiation gives stationary points $w=2\pm\sqrt5$, and hence
\begin{equation}
\min_w f_\infty^{(2)}(w)=\frac{3-\sqrt5}{8}
\approx0.095492>\lambda_{\mathrm{Shifts}},
\label{eq:boundary-two}
\end{equation}
attained at $w=2+\sqrt5$.

\item \textbf{All three variables $\to\infty$.} $f \to 1/8>\lambda$.
\end{enumerate}

The cyclic symmetry covers the other choices of one or two infinite coordinates. Combining the stationary-point classification with the boundary analysis, we obtain:

\begin{theorem}[Complete complex equality set]
\label{thm:main}
For normalized product states $|\phi\rangle$, the equality condition
\begin{equation}
\langle\phi|P_{\mathcal S}|\phi\rangle
=\lambda_{\mathrm{Shifts}}
\end{equation}
holds for exactly eight complex product rays: the two $C_3$-fixed rays in Proposition~\ref{prop:symmetric} and the two cyclic orbits of size three in Eqs.~\eqref{eq:orbit1}--\eqref{eq:orbit2}.
\end{theorem}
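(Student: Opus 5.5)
The plan is to assemble the theorem from three ingredients already in place: the phase reduction of Lemma~\ref{lem:real}, the finite stationary-point classification, and the projective boundary analysis. The statement concerns all complex minimizers, not only real ones, so I will begin by sharpening Lemma~\ref{lem:real} from ``a minimum is attained at real amplitudes'' to ``every minimizer is real up to local phases,'' that is, every equality ray has $\cos\phi_j=\pm1$ for each $j$ with $r_j\neq0$.

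\textbf{Step 1: minimizers have real amplitudes.} I will use the multilinear interpolation \eqref{eq:multilinear-interpolation}. Fix the radii of a minimizer. The value $N(\mathbf x)$ is a convex combination of the vertex values $N(\boldsymbol\sigma)$ with weights $\prod_j(1+\sigma_jx_j)/2$. If some $x_j\in(-1,1)$, the weights are positive on at least two vertices that differ in coordinate $j$. Equality with the minimum then forces both of those vertex values to equal the minimum. I would then show this is impossible unless the corresponding coefficient of $N$ vanishes identically in $x_j$, i.e. $\partial N/\partial x_j\equiv0$ on the relevant face. Since $r_j=0$ or $r_j=\infty$ makes the phase irrelevant, the degenerate situations to rule out are radii for which a pair of vertex configurations differing only in the sign of one tangent coordinate are both real minimizers. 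These can be checked against the explicit list of real minimizers (Step 2). For each of the eight real rays, flipping the sign of any single finite nonzero coordinate produces a point that is not on the list and so has $f>\lambda_{\mathrm{Shifts}}$. This contradiction shows that every complex minimizer is real.

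\textbf{Step 2: real minimizers.} On the compact space $(\mathbb{RP}^1)^3$ the real minimum is attained. A minimizer either lies in the finite chart, where it must be a stationary point of $f$ and hence one of the 28 points classified via the Gr\"obner basis \eqref{eq:elim-poly}, or it has at least one coordinate at infinity. In the latter case it is a minimizer of $f_\infty^{(1)}$, $f_\infty^{(2)}$, or the constant $1/8$, up to cyclic relabeling. The boundary values \eqref{eq:boundary-one} and \eqref{eq:boundary-two}, together with $1/8$, all strictly exceed $\lambda_{\mathrm{Shifts}}$, so no minimizer lies at infinity. One point needs care: the face $u=\infty$ with $v$ or $w$ also infinite is covered by the two- and three-infinite cases, and the cyclic symmetry of \eqref{eq:f-of-t-general} covers the other faces. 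Among the 28 finite stationary points, exactly $2+6=8$ have value $\lambda_{\mathrm{Shifts}}$; the other levels $3/4$, $1$, and $f_\pm$ are strictly larger. Distinct tangent triples give distinct rays, so there are exactly eight.

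\textbf{Main obstacle.} I expect Step 1 to be the hardest part, because it requires strict rather than weak convexity. My first attempt would be a direct computation: for each of the eight real minimizers and each $j$, evaluate $f$ with $u_j\to-u_j$ and check that it exceeds $\lambda_{\mathrm{Shifts}}$. For example, sign flips on $t_\pm$ give configurations that are not in the minimizer list. If that is not conclusive, I would fall back on Step 2's completeness to close the argument: any vertex value equal to the minimum is itself a real minimizer, and so must appear in the list of eight.
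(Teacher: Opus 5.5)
Your proposal is correct and follows essentially the same route as the paper. You classify the real minimizers through the 28 Gr\"obner stationary points together with the projective-boundary bounds, and then use the multilinear phase interpolation to force every complex minimizer onto a real sign vertex. Your single-coordinate sign-flip check, with its strict inequality taken from the completeness of the real list, is a leaner version of the paper's evaluation of all eight sign vertices for each radius representative. It suffices because the positively weighted vertices always contain an edge in any direction $j$ with $x_j\in(-1,1)$.
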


\begin{proof}
The lower bound $\lambda_{\mathrm{Shifts}}$ is known from Ref.~\cite{Branciard2010}. Lemma~\ref{lem:real} reduces the search for equality to real phase vertices. The exact Gr\"obner-basis analysis gives 28 finite real stationary points: 2 symmetric equality rays, 2 other symmetric points, 6 nonsymmetric equality rays, and 18 nonsymmetric nonminimizers. Appendix~\ref{app:groebner} gives the zero-dimensional completeness argument. Equations~\eqref{eq:boundary-one}--\eqref{eq:boundary-two}, together with Appendix~\ref{app:boundary}, exclude equality on every projective boundary. If a complex product state attains equality, Eq.~\eqref{eq:multilinear-interpolation} expresses its value as a convex combination equal to the global minimum; consequently, every sign vertex with nonzero interpolation weight must itself be a real equality point. The real classification therefore restricts its radii to the four representatives in Appendix~\ref{app:phase}, where the minimizing sign vertex is unique. Hence no additional complex product ray attains equality.
\end{proof}

\begin{corollary}[Witness-detection threshold]
For $W=P_{\mathcal S}-\lambda_{\mathrm{Shifts}}\mathbb I$ and
$\rho(p)=(1-p)\rho_{\mathrm{Shifts}}+p\mathbb I/8$, one has
\begin{equation}
\operatorname{Tr}[W\rho(p)]=-\lambda_{\mathrm{Shifts}}+\frac{n}{D}p.
\end{equation}
Hence this witness detects $\rho(p)$ for $p<p_W$, where
\begin{equation}
\boxed{p_W=\frac{D}{n}\lambda_{\mathrm{Shifts}}
=2-\frac{3\sqrt6}{4}\approx0.162883.}
\label{eq:pW-closed}
\end{equation}
At $p=p_W$ the witness expectation vanishes.
\end{corollary}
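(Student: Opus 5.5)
The corollary follows from a direct trace computation; nothing about the contact set is needed. I will only use $\operatorname{Tr}P_{\mathcal S}=n=4$, $P_{\mathcal S}^2=P_{\mathcal S}$, and the definition of $\rho_{\mathrm{Shifts}}$ in Eq.~\eqref{eq:rho-shifts}.

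First, write $\rho_{\mathrm{Shifts}}=(\mathbb I-P_{\mathcal S})/(D-n)$, here with $D-n=4$. Since $P_{\mathcal S}(\mathbb I-P_{\mathcal S})=0$, this gives
\[
\operatorname{Tr}[W\rho_{\mathrm{Shifts}}]=-\lambda_{\mathrm{Shifts}}\operatorname{Tr}\rho_{\mathrm{Shifts}}=-\lambda_{\mathrm{Shifts}}.
\]
Second, for the maximally mixed state,
\[
\operatorname{Tr}[W\,\mathbb I/D]=n/D-\lambda_{\mathrm{Shifts}}.
\]
By linearity,
\[
\operatorname{Tr}[W\rho(p)]=(1-p)(-\lambda_{\mathrm{Shifts}})+p\,(n/D-\lambda_{\mathrm{Shifts}})=-\lambda_{\mathrm{Shifts}}+\tfrac{n}{D}p,
\]
which is the stated formula.

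This expression is affine and strictly increasing in $p$. It is therefore negative exactly when $p<p_W=(D/n)\lambda_{\mathrm{Shifts}}$, and it vanishes at $p=p_W$. Substituting $D/n=2$ and Eq.~\eqref{eq:known-lambda} gives $p_W=2-3\sqrt6/4\approx0.162883$.

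Negativity of $\operatorname{Tr}[W\rho(p)]$ means $W$ detects $\rho(p)$. This rests on $W$ being a valid witness, i.e. block positive, which holds because $\lambda_{\mathrm{Shifts}}$ is by definition the minimum of $\langle\phi|P_{\mathcal S}|\phi\rangle$ over product states, as established in Ref.~\cite{Branciard2010} and Theorem~\ref{thm:main}. Block positivity extends to all fully separable states by convexity. No step here is a real obstacle; the only care needed is using $D-n$, not $n$, to normalize $\rho_{\mathrm{Shifts}}$.
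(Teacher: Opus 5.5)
Your proof is correct and is the argument the paper intends. The paper gives no separate proof and calls this a direct corollary of the known coefficient; the content is exactly your trace computation using $\operatorname{Tr}P_{\mathcal S}=n$, $P_{\mathcal S}(\mathbb I-P_{\mathcal S})=0$, and $\rho_{\mathrm{Shifts}}=(\mathbb I-P_{\mathcal S})/(D-n)$. You are also right that validity of $W$ as a witness needs only the lower bound $\lambda_{\mathrm{Shifts}}$ from Ref.~\cite{Branciard2010}, not the contact-set classification.
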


\section{Spanning property and optimality of the witness}
\label{sec:spanning}

Define the product-state contact set
\begin{equation}
\mathcal Z(W)=\bigl\{|\phi\rangle\text{ product}:\langle\phi|W|\phi\rangle=0\bigr\}.
\end{equation}
Theorem~\ref{thm:main} identifies $\mathcal Z(W)$ completely, allowing the spanning criterion to be tested exactly.

\begin{proposition}[Full-span contact set]
\label{prop:spanning}
The eight minimizing product rays span the full three-qubit Hilbert space:
\[
\operatorname{span}\{|\Phi_k\rangle\}_{k=1}^8 = (\mathbb{C}^2)^{\otimes 3},
\]
where $\{|\Phi_1\rangle,\ldots,|\Phi_8\rangle\}$ are the unnormalized product vectors corresponding to the eight minimizers.
\end{proposition}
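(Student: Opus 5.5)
The plan is to show that the $8\times 8$ matrix whose columns are the eight unnormalized contact vectors $|\Phi_k\rangle$, written in the computational basis, has nonzero determinant.

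\textbf{Step 1: write down the vectors.} In the main chart each contact ray is
\[
|\Phi\rangle=(|0\rangle+u|1\rangle)\otimes(|0\rangle+v|1\rangle)\otimes(|0\rangle+w|1\rangle).
\]
Its components are the monomials $1,\,w,\,v,\,vw,\,u,\,uw,\,uv,\,uvw$, ordered by the binary label $|xyz\rangle$. The eight parameter triples are:
\begin{itemize}
\item the two $C_3$-fixed points $(t_+,t_+,t_+)$ and $(t_-,t_-,t_-)$ from Proposition~\ref{prop:symmetric};
\item the six points of Eqs.~\eqref{eq:orbit1}--\eqref{eq:orbit2} together with their cyclic shifts.
\end{itemize}
All coordinates are finite and real, so no second chart is needed.

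\textbf{Step 2: use the $C_3$ symmetry to block-diagonalize.} The cyclic qubit permutation $\Pi$ maps each orbit to itself and fixes the symmetric rays. Rather than computing the raw determinant, I would decompose $\mathcal H$ into the eigenspaces of $\Pi$ with eigenvalues $1,\omega,\omega^2$, which have dimensions $4,2,2$.
\begin{itemize}
\item For each size-three orbit $\{\Phi,\Pi\Phi,\Pi^2\Phi\}$, the Fourier combinations $\sum_k\omega^{-jk}\Pi^k\Phi$ lie in the $j$-th eigenspace.
\item The span is therefore full iff three conditions hold. First, the $\omega$-projections of the two orbit representatives are independent in the 2-dimensional $\omega$-sector. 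Second, the same holds in the $\omega^2$-sector; since all vectors are real, this follows from the first by complex conjugation. Third, the two symmetric vectors together with the two orbit averages span the 4-dimensional symmetric-under-$C_3$ sector.
\end{itemize}
This reduces the problem to one $2\times2$ and one $4\times4$ determinant. For the $4\times4$ block I would use the basis $|000\rangle$, $|111\rangle$, the weight-one cyclic sum, and the weight-two cyclic sum. In that basis each vector's coordinates are $1$, $uvw$, $u+v+w$, and $uv+vw+wu$ (up to scaling).

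\textbf{Step 3: evaluate exactly.} The entries lie in the field $\mathbb Q(\sqrt6,\sqrt{9+4\sqrt6},\alpha)$. I would simplify using:
\begin{itemize}
\item $t_+t_-=1$ and $t_++t_-=4+2\sqrt6$;
\item $\alpha^2=6\sqrt6-9$;
\item the relations $(\alpha/3)\cdot(-3/\alpha)=-1$.
\end{itemize}
The aim is to reduce each determinant to an explicit algebraic number and check that it is nonzero, for example by rational interval arithmetic or by computing its minimal polynomial and confirming that the constant term is nonzero.

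The main obstacle is keeping the nested radicals manageable in the $4\times4$ symmetric block. If hand simplification stalls, I would fall back on computing the full $8\times8$ determinant symbolically in the stated number field, with a high-precision numerical check as corroboration.
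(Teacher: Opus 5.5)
Your proposal is correct. It computes the same determinant as the paper, but by a different decomposition.

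\textbf{The paper's route.} The paper forms the full $8\times8$ matrix $M$ of contact vectors and reports the exact value of $\det M$ in Eq.~\eqref{eq:determinant}. That value comes from computer algebra in the field generated by $\sqrt6$, $\sqrt{9+4\sqrt6}$ and $\alpha$.

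\textbf{Your route.} You use the $C_3$ action to split $(\mathbb{C}^2)^{\otimes3}$ into sectors of dimensions $4,2,2$. You replace each size-three orbit by its discrete Fourier combinations, which is an invertible change of columns. This reduces the rank question to one $4\times4$ determinant, one $2\times2$ determinant, and its complex conjugate. All of these steps are sound: the sectors form a direct sum, and reality of the vectors gives the $\omega^2$ sector for free.

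\textbf{What each buys.} The paper's route is shorter and needs no bookkeeping of conventions, but it is opaque. Yours can be checked by hand and explains why $\det M$ factors.

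\textbf{The radical problem you anticipated disappears.} One has $\alpha\sqrt{9+4\sqrt6}=3(1+\sqrt6)$. Hence $\alpha$ already lies in $\mathbb{Q}(\sqrt6,s)$ with $s=\sqrt{9+4\sqrt6}$, and $\alpha/3-3/\alpha=-2/s$. Write $t_\pm=m\pm s$ with $m=2+\sqrt6$, so that $m^2-s^2=1$. Then:
\begin{itemize}
\item The $4\times4$ block, with coordinates $(1,uvw,u+v+w,uv+vw+wu)$ used uniformly for all four columns, has absolute value $16(m^2+2)^2=768(5+2\sqrt6)$.
\item The $\omega$-block, with coordinates $u+\omega^2v+\omega w$ and $vw+\omega^2wu+\omega uv$, has modulus $16(3+\sqrt6)/s$. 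Its squared modulus is $\tfrac{256}{5}(3+2\sqrt6)$.
\end{itemize}
The Fourier changes of basis on the columns and on the rows each contribute a factor of modulus $27$, so they cancel. This gives $|\det M|=768(5+2\sqrt6)\cdot\tfrac{256}{5}(3+2\sqrt6)=\tfrac{196608}{5}(39+16\sqrt6)$, exactly the paper's value.
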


\begin{proof}
Construct the $8\times8$ matrix $M$ whose columns are the unnormalized vectors $|\Phi_j\rangle=(1,u_j)^T\otimes(1,v_j)^T\otimes(1,w_j)^T$, expressed in the ordered computational basis $\{|000\rangle,|001\rangle,\ldots,|111\rangle\}$. Order the columns as the two symmetric minimizers, followed by the three cyclic images of Eq.~\eqref{eq:orbit1} and then those of Eq.~\eqref{eq:orbit2}. Exact symbolic reduction gives
\begin{equation}
\det M=-\frac{196608}{5}\bigl(39+16\sqrt6\bigr)\neq0.
\label{eq:determinant}
\end{equation}
Therefore $\operatorname{rank}(M)=8$, and the complete contact set contains a basis of the full space.
\end{proof}

\begin{corollary}[Spanning-property optimality]
\label{cor:optimal}
The canonical projector witness $W = P_{\mathcal{S}} - \lambda_{\mathrm{Shifts}}\,\mathbb{I}$ satisfies the spanning criterion~\cite{Lewenstein2000}: the eight minimizing product states lie on the zero-expectation surface $\langle\Phi_k|W|\Phi_k\rangle = 0$ and span the full Hilbert space. Consequently, $W$ is an \emph{optimal entanglement witness}: no nonzero positive semidefinite operator can be subtracted, with a positive coefficient, while preserving block positivity.
\end{corollary}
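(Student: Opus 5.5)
The plan is to prove the corollary directly from the Lewenstein \emph{et al.}\ argument, using Theorem~\ref{thm:main} and Proposition~\ref{prop:spanning} as inputs. First I check that $W$ is a genuine witness. Block positivity is the definition of $\lambda_{\mathrm{Shifts}}$ in Eq.~\eqref{eq:lambda-def}: for every normalized product $|\phi\rangle$ one has $\langle\phi|W|\phi\rangle=\langle\phi|P_{\mathcal S}|\phi\rangle-\lambda_{\mathrm{Shifts}}\ge0$. Non-positivity holds because $\operatorname{Tr}[W\rho_{\mathrm{Shifts}}]=-\lambda_{\mathrm{Shifts}}<0$, since $P_{\mathcal S}\rho_{\mathrm{Shifts}}=0$. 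Next, Theorem~\ref{thm:main} places the eight rays $|\Phi_k\rangle$ on the zero surface, so $\langle\Phi_k|W|\Phi_k\rangle=0$. Proposition~\ref{prop:spanning} then says they span $(\mathbb C^2)^{\otimes3}$.

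The core step is the standard contradiction. Suppose $W'=W-\epsilon P$ is block positive for some $\epsilon>0$ and some nonzero $P\ge0$. Evaluate at each contact state:
\[
0\le\langle\Phi_k|W'|\Phi_k\rangle=-\epsilon\langle\Phi_k|P|\Phi_k\rangle\le0 .
\]
Hence $\langle\Phi_k|P|\Phi_k\rangle=0$. Because $P\ge0$, this gives $\|P^{1/2}|\Phi_k\rangle\|^2=0$, so $P|\Phi_k\rangle=0$ for all $k$. By Proposition~\ref{prop:spanning} (the nonzero determinant in Eq.~\eqref{eq:determinant}), the $|\Phi_k\rangle$ form a basis. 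Therefore $P$ vanishes on a basis, so $P=0$, which is a contradiction. This proves optimality in the stated sense: no nonzero positive semidefinite operator can be subtracted with a positive coefficient while preserving block positivity. The result is the sufficient direction of the spanning criterion.

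I expect no real obstacle, because all the difficulty sits in the results already established. The one point to state carefully is the \emph{exact} nonvanishing of $\det M$ with algebraic entries, since that is what guarantees linear independence. I would also note that only the forward direction of the criterion is needed, so the completeness part of Theorem~\ref{thm:main} is used only to have all eight contact states available; the spanning property itself only requires that these eight rays lie in $\mathcal Z(W)$.
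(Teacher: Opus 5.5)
Your proposal is correct and matches the paper's proof: evaluate $W-\epsilon Q$ on the eight contact vectors to get $\langle\Phi_k|Q|\Phi_k\rangle=0$, use $Q\ge0$ to conclude $Q|\Phi_k\rangle=0$, and use the nonzero determinant of Proposition~\ref{prop:spanning} to force $Q=0$. Two of your points are valid additions that the paper leaves implicit: the check that $\operatorname{Tr}[W\rho_{\mathrm{Shifts}}]=-\lambda_{\mathrm{Shifts}}<0$, and the remark that optimality needs only the eight rays to lie in $\mathcal Z(W)$, not the completeness part of Theorem~\ref{thm:main}.
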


\begin{proof}
Because the vectors are unnormalized, the minimizing condition reads
\begin{equation}
\langle\Phi_k|P_{\mathcal S}|\Phi_k\rangle
=\lambda_{\mathrm{Shifts}}\langle\Phi_k|\Phi_k\rangle,
\end{equation}
and hence $\langle\Phi_k|W|\Phi_k\rangle=0$. Suppose that $W-\epsilon Q$ remained block positive for some $Q\ge0$ and $\epsilon>0$. Evaluation on each contact vector gives
\begin{equation}
0\le\langle\Phi_k|(W-\epsilon Q)|\Phi_k\rangle
=-\epsilon\langle\Phi_k|Q|\Phi_k\rangle,
\end{equation}
so $\langle\Phi_k|Q|\Phi_k\rangle=0$. Positivity of $Q$ then implies $Q|\Phi_k\rangle=0$ for every $k$. Since Proposition~\ref{prop:spanning} shows that these vectors span the full space, $Q=0$. Thus no nonzero positive operator can be subtracted, which proves optimality directly and recovers the spanning criterion~\cite{Lewenstein2000} in the present multipartite fully separable setting.
\end{proof}

\section{Discussion and outlook}
\label{sec:discuss}

Branciard \emph{et al.} established the value of $\lambda_{\mathrm{Shifts}}$ and exhibited the four equality states mapped to Eq.~\eqref{eq:branciard-four}~\cite{Branciard2010}. Those four vectors span only a four-dimensional subspace. The present result adds the four rays in Eq.~\eqref{eq:additional-four} and proves that the resulting eight-ray set is complete. The exact determinant in Eq.~\eqref{eq:determinant} then shows that the complete contact set contains a basis of the three-qubit Hilbert space, converting the equality classification into an optimality proof for the canonical projector witness.

Two clarifications are in order. First, the white-noise value in Eq.~\eqref{eq:pW-closed} is the detection threshold of this particular optimal witness, not a claim about the exact separability threshold of the noisy state. Second, optimality in the Lewenstein sense used here---no nonzero positive semidefinite operator can be subtracted while preserving block positivity---does not by itself imply maximal white-noise tolerance among all witnesses detecting the same state. For many bipartite tile-UPB states, Wie\'sniak \emph{et al.}~\cite{Wiesniak2020} found that state-tailored witnesses obtained from approximations to the closest separable state provide stronger detection than the corresponding UPB projector witnesses. This state-specific geometric optimization is distinct from the spanning-property optimality established here. The present construction therefore provides an analytically exact witness that is optimal in the Lewenstein spanning sense, together with its closed-form white-noise threshold; whether a different witness yields a larger white-noise tolerance remains an open numerical question.

A natural extension is the GenShifts family~\cite{DiVincenzo2003} on $2k-1$ qubits, for which the analogous elimination problem involves $2k-1$ variables and a Gr\"obner-basis computation whose complexity grows rapidly; the $k=3$ (five-qubit) case is the next case beyond the present $k=2$ one where the exact strategy may still be feasible, while a general solution will likely require additional symmetry reduction. 
\begin{acknowledgments}
This work was supported by the National Natural Science Foundation of China (NSFC) Grants No.~12575106, No.~12675136, and No.~12147214, and by the Liaoning Province Fundamental Research Fund No.~LJ212410165019. The authors used GLM-5.2, Kimi K3, and Aether (v0.73) for limited assistance with manuscript revision, code generation and debugging, and scientific consistency checks. The authors specified the physical assumptions and exact validation criteria, independently checked all algebraic outputs, and take full responsibility for the work.
\end{acknowledgments}

\section*{Data Availability Statement}
No experimental or numerical data sets were generated or analyzed in this study. The symbolic calculations reported here follow directly from the equations in the manuscript; the auxiliary scripts used for verification are available from the corresponding authors upon reasonable request.

\appendix

\section{Complete phase-equality conditions}
\label{app:phase}

For comparison with Ref.~\cite{Branciard2010}, let $P_{\rm B}$ denote the Shifts projector in the convention used there. The single-qubit unitary $U=HX$ acts as
\begin{equation}
U|0\rangle=|-\rangle,\quad U|1\rangle=|+\rangle,
\quad U|+\rangle=|0\rangle,
\quad U|-\rangle=-|1\rangle,
\end{equation}
and maps the two conventions according to
\begin{equation}
P_{\mathcal S}=U^{\otimes3}P_{\rm B}U^{\dagger\otimes3},
\end{equation}
up to the ordering and irrelevant phases of the UPB vectors. For a real local state parameterized there by the Bloch angle $\theta$, the transformed tangent coordinate is $u=\tan(\theta/2-\pi/4)$. For the four equality states exhibited in Ref.~\cite{Branciard2010}, $\cos\theta_0=(2-\sqrt6)/2$. Here $\operatorname{cyc}(u,v,w)=\{(u,v,w),(v,w,u),(w,u,v)\}$ denotes the three-element orbit generated by cyclic permutation of the coordinates. Applying the transformation gives
\begin{equation}
\bigl\{(t_-,t_-,t_-)\bigr\}
\cup\operatorname{cyc}\!\left(\frac{\alpha}{3},-t_+,-\frac{3}{\alpha}\right).
\label{eq:branciard-four}
\end{equation}
The four additional contact rays found here are
\begin{equation}
\bigl\{(t_+,t_+,t_+)\bigr\}
\cup\operatorname{cyc}\!\left(-\frac{\alpha}{3},-t_-,\frac{3}{\alpha}\right).
\label{eq:additional-four}
\end{equation}
Thus the new equality information is the second four-ray subset and the proof that the union is complete.

Equation~\eqref{eq:multilinear-interpolation} not only proves that a real phase vertex attains the minimum, but also characterizes equality. Modulo cyclic permutations, the eight real minimizing rays have four radius representatives. Exact evaluation of their eight sign vertices gives
\begin{center}
\begin{tabular}{c|c}
\hline\hline
radius representative & unique minimizing sign vertex \\
\hline
$(t_+,t_+,t_+)$ & $(+,+,+)$ \\
$(t_-,t_-,t_-)$ & $(+,+,+)$ \\
$(\alpha/3,t_+,3/\alpha)$ & $(+,-,-)$ \\
$(\alpha/3,t_-,3/\alpha)$ & $(-,-,+)$ \\
\hline\hline
\end{tabular}
\end{center}
At the displayed vertex the value is $\lambda_{\mathrm{Shifts}}$; every other sign vertex has a strictly positive algebraic difference from $\lambda_{\mathrm{Shifts}}$. Since the coefficients in Eq.~\eqref{eq:multilinear-interpolation} are nonnegative and sum to one, equality at an interior phase point would require support only on minimizing vertices. The minimizing vertex is unique in every row, so $\cos\phi_j=\pm1$ for all three parties. Cyclic invariance gives the same conclusion for the remaining four rays. Hence the eight real rays listed in the main text are also the complete set of complex minimizing rays.

\section{Completeness of the stationary-point classification}
\label{app:groebner}

Let $g_u,g_v,g_w\in\mathbb Q[u,v,w]$ be the numerators of the three derivatives of Eq.~\eqref{eq:f-of-t-general}; their denominators are strictly positive on $\mathbb R^3$. For the ideal
\begin{equation}
I=\langle g_u,g_v,g_w\rangle
\end{equation}
and lexicographic order $w\succ v\succ u$, exact computation gives the leading-monomial ideal
\begin{equation}
\operatorname{LM}(I)=
\langle w,v^3,u^2v^2,u^7v,u^{27}\rangle.
\label{eq:appendix-leading}
\end{equation}
The standard monomials are
\begin{equation}
\{u^i:0\le i\le26\}\cup
\{u^iv:0\le i\le6\}\cup
\{u^iv^2:0\le i\le1\},
\end{equation}
and therefore
\begin{equation}
\dim_{\mathbb Q}\bigl(\mathbb Q[u,v,w]/I\bigr)=27+7+2=36.
\label{eq:appendix-length}
\end{equation}
Thus $I$ is zero dimensional and has algebraic length 36 over $\mathbb C$, counted with multiplicity.

The last polynomial in the lexicographic basis is the eliminant $P(u)$ in Eq.~\eqref{eq:elim-poly}. It has degree 27, is square free, and an exact Sturm count gives 19 real and eight nonreal roots. The 28 triples listed in Sec.~\ref{sec:global} are distinct exact real zeros of $I$. Because a finite projection of a zero-dimensional variety is closed, the elimination theorem implies that each of the eight nonreal eliminant roots lifts to at least one nonreal zero of $I$. Hence $I$ has at least $28+8=36$ distinct complex zeros. Equation~\eqref{eq:appendix-length} bounds the total algebraic length by 36, so all these zeros are simple and no additional zero of $I$ exists. In particular, there is no additional real stationary point. This proves the completeness of the 28-point real classification without relying on numerical root searches.

\section{Exact projective-boundary analysis}
\label{app:boundary}

For the one-infinity face, let $F_1(v,w)$ denote the function in Eq.~\eqref{eq:boundary-one-function}. The numerators of its two derivatives are
\begin{align}
h_v={}&v^2w^2-2v^2w-3v^2-4vw^2-w^2+2w+3,\nonumber\\
h_w={}&v^2w^2-4v^2w-v^2-2vw^2-8vw+2v+w^2-1.
\end{align}
Their exact resultants factor as
\begin{align}
\operatorname{Res}_w(h_v,h_w)
={}&-16v(v-1)(v^2+4v+1)\nonumber\\
&\times(v^4-2v^3-20v^2-18v+3),\label{eq:appendix-resultant-v}\\
\operatorname{Res}_v(h_v,h_w)
={}&16w(w+1)(w^2-3)\nonumber\\
&\times(w^4-10w^3+16w^2+6w-9).
\label{eq:appendix-resultant-w}
\end{align}
After clearing the positive denominator, the numerator of the following stationary-value polynomial lies in the ideal $\langle h_v,h_w\rangle$:
\begin{equation}
\begin{split}
0={}&(F_1-1)(4F_1-3)(64F_1^2-40F_1+3)\\
&\times(128F_1^2-112F_1+11).
\end{split}
\end{equation}
Therefore every finite stationary value belongs to
\begin{equation}
\left\{1,\frac34,\frac{5\pm\sqrt{13}}{16},
\frac{7\pm3\sqrt3}{16}\right\}.
\end{equation}
The projective boundary of this face is the two-infinity problem. Its derivative numerator is $w^2-4w-1$, giving the minimum $(3-\sqrt5)/8$ at $w=2+\sqrt5$. The remaining corner has value $1/8$. Exact comparison with the finite minimum gives
\begin{align}
\frac{5-\sqrt{13}}{16}-\lambda_{\mathrm{Shifts}}
&=\frac{6\sqrt6-11-\sqrt{13}}{16}>0,\\
\frac{3-\sqrt5}{8}-\lambda_{\mathrm{Shifts}}
&=\frac{3\sqrt6-5-\sqrt5}{8}>0,\\
\frac18-\lambda_{\mathrm{Shifts}}
&=\frac{3\sqrt6-7}{8}>0.
\end{align}
This exhausts all projective boundary strata by cyclic symmetry.

\section{Spanning determinant and symbolic verification}
\label{app:reproduce}

The determinant in Eq.~\eqref{eq:determinant} is obtained by exact reduction in the number field generated by $\sqrt6$, $\sqrt{9+4\sqrt6}$, and $\alpha$; in particular, no floating-point arithmetic enters the rank proof.

The exact calculations were performed with Python 3.14.4 and SymPy 1.14.0~\cite{Meurer2017}. Using exact symbolic arithmetic, we constructed the lexicographic Gr\"obner basis in the order $w\succ v\succ u$, verified Eq.~\eqref{eq:appendix-leading} and the quotient dimension, factored and Sturm-counted the eliminant, checked all exact stationary representatives and phase-vertex inequalities, reproduced the boundary resultants and stationary-value polynomial, and verified the exact spanning determinant. All calculations were specified and independently verified by the authors.

\setlength{\bibsep}{2pt}
\bibliographystyle{apsrev4-2}

\end{document}